\begin{document}
\title{Modular WSS-based OXCs for Large-Scale Optical Networks}

\author{Tong~Ye,~\IEEEmembership{Member,~IEEE,}
        Kui~Chen~
\thanks{This work was supported by the National Science Foundation of China (61671286).}
\thanks{Tong Ye and Kui Chen are with the State Key Laboratory of Advanced Optical Communication Systems and Networks, Shanghai Jiao Tong University, Shanghai 200240, China. (email: \{yetong, aogutiancheng\}@sjtu.edu.cn).}
}
\maketitle
\begin{abstract}
The explosive growth of broadband applications calls for large-scale optical cross-connects (OXCs). However, the classical wavelength selective switch (WSS) based OXC is not scalable in terms of the size of employed WSSs and the cabling complexity. To solve this problem, we propose a three-phase approach to construct a modular WSS-based OXC. In phase 1, we factorize the interconnection network between the input stage and the output stage of the traditional OXC into a set of small-size interconnection networks. In phase 2, we decompose each WSS into a two-stage cascaded structure of small-size WSSs. In phase 3, we combine the small-size interconnection networks with the small-size WSSs to form a set of small-size OXC modules. At last, we obtain a modular OXC, which is a network of small-size OXCs. Similar to the classical OXC, the modular OXC is nonblocking at each wavelength and possesses a self-routing property. Our analysis shows that the modular OXC has small cabling complexity and acceptable physical-layer performance.
\end{abstract}
\begin{IEEEkeywords}
Optical Cross-connect (OXC), Wavelength Selective Switch (WSS), Modularity, Cabling Complexity.
\end{IEEEkeywords}
\IEEEpeerreviewmaketitle
\section{Introduction}\label{introduction}
\IEEEPARstart{I}{n} recent years, optical circuit switches (OCSs) were gradually applied to metro area networks and backbone networks to provide huge capacity. With the advent of 5G era, ubiquitous broadband applications, such as video sharing, spring up extensively. A single optical fiber with wavelength division multiplexing (WDM) technology may not be enough to carry the traffic between two adjacent nodes. Usage of multiple fibers in a network link at the same time has been a development trend of the next-generation network \cite{01}. According to the prediction of \cite{01}, one link may need to use up to 80 parallel fibers in 2027. As a result, large-scale OCSs are highly desirable in the near future.

However, as the key component of OCSs, traditional optical cross-connects (OXCs) cannot scale to a large port count. The function of OXCs is to provision connections in a nonblocking manner. As Fig. \ref{fig1} shows, each port carries multiple wavelengths. The OXC can always build up a connection from an input to an output, if they have the same idle wavelength. An example is the connection at wavelength $\lambda_1$ from input 3 to output 2 in Fig. \ref{fig1}. The traditional OXC is not scalable due to the following two reasons. In a classical $N\times N$ OXC, there are $N$ $1\times N$ wavelength selective switches (WSSs) at the input stage and $N$ $N\times1$ WSSs at the output stage. Currently, the port count of the WSS is only $\sim20$, which limits the port count of the OXC to $\sim20$. Also, the OXC connects an input WSS to an output WSS via exactly one fiber, and thus contains $N^2$ fibers. When $N$ is large, the number of fibers will be very large and thus lead to a high cabling complexity. As \cite{02} pointed out, high cabling complexity has become the development obstruction of the OCS.
\begin{figure}
\centering
\includegraphics[scale=0.6]{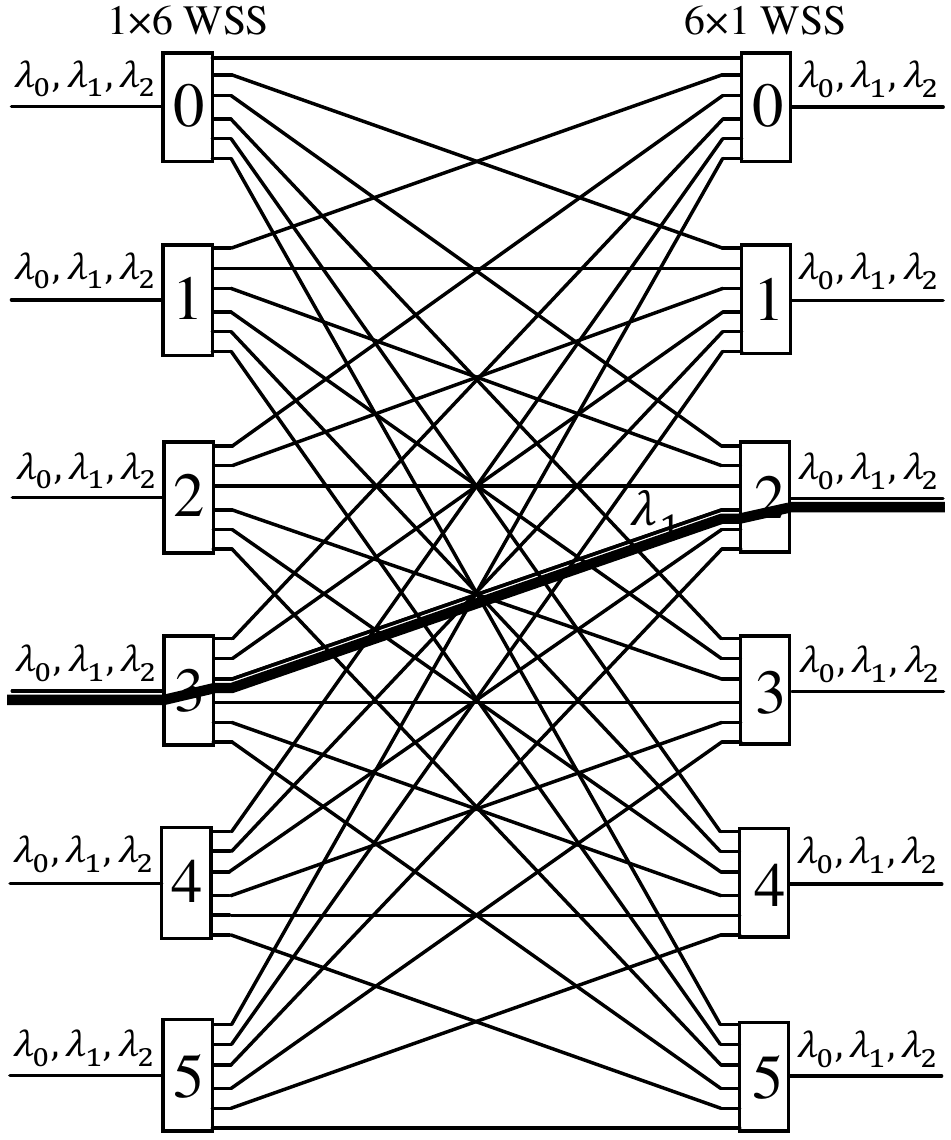}
\caption{A classical $N\times N$ WSS-based OXC.}
\label{fig1}
\end{figure}
\subsection{Previous Work}
To address this problem, several previous works have been focused on the construction of large-scale OXCs \cite{03,05,06,07,08,09,10,11}.

The first one is HIER proposed by \cite{03,05}. The port count of HIER is $N=DF$, where $D$ is the degree of the node, i.e., the number of links that connect to other nodes, and $F$ is the number of fibers on each link. Ref. \cite{03} first decomposes a $1\times N$ WSS to a $1\times D$ WSS, each output of which is attached by a $1\times F$ WSS, and then replaces each $1\times F$ WSS by a wavelength non-sensitive $1\times F$ optical space-switching module. The motivation of such replacement is to reduce the system cost at the expense of the flexibility of wavelength switching. Thus, HIER is internally blocking. As \cite{03} shows, the optical network equipped with HIERs should take into consideration the internal structure and the switching state of the nodes when it sets up an end-to-end lightpath for a request. This may impose a high computation complexity on the routing process. Also, at the output side of HIER, there are $D$ $N\times F$ delivery and coupling switch (DCSW) modules, each of which includes $NF$ fibers. Typically, $D$ is a small integer, and thus $F$ is on the same order of $N$. This implies that each DCSW module is a large-scale module if $N$ is large. Another issue of HIER is that, each output port of the DCSW is an $N\times1$ optical coupler (OC), which will lead to a large insertion loss when $N$ is large.

To cut down the cost and the insertion loss of HIER, \cite{07,08} proposed a two-stage OXC. Different from HIER, this two-stage structure replaces each $1\times F$ WSS by a $1\times F$ waveband selective switch (WBSS), and each $N\times1$ OC by a cascaded structure including $M$ $(N/M)\times1$ OCs and an $M\times1$ WSS. Though the two-stage OXC is more flexible than HIER, it is also internally blocking.

Ref. \cite{09,10} proposed a modular OXC that is constructed by interconnecting a set of small-size OXCs in a ring topology. In particular, most of the ports of each small-size OXC are the ports of the modular OXC, while the remaining ports are connected with other small-size OXCs. This design is modular since it only employs small-size OXCs. Another advantage is that it has a low cabling complexity. However, it is clear that this structure is internally blocking. Also, the insertion loss between different input-output pair is quite different. A connection that has to pass through a number of small-size OXCs will suffer a large optical loss, which limits the scalability of this design.

Ref. \cite{11} presented an arrayed waveguide grating (AWG) based OXC for high-performance computers. Different from that of the traditional OXC, each port of this OXC only carries one wavelength at the same time. In an $N\times N$ AWG-based OXC, there are $N$ $1\times K$ optical switches (OSs) at the input stage and $K$ $M\times M$ AWGs at the output stage, where $N=2KM$. The signal of a connection is generated by a tunable laser. The OXC selects an AWG for a connection via a $1\times K$ OS and determines the output of the AWG by tuning the wavelength of the laser. This AWG-based OXC can route the connections in a nonblocking manner. However, the cabling complexity is high. Note that a classical OXC can carry $Nw$ connections at the same time. To provide the same capacity, we need an $Nw\times Nw$ AWG-based OXC, which contains $(K+1/2)Nw$ fibers. If $K=\sqrt{Nw}$, the cabling complexity is $O\left(N^{1.5}w^{1.5}\right)$.

In summary, the existing designs for large-scale OXCs are either internally blocking or not scalable in terms of cabling complexity.
\subsection{Our Contributions}
This paper proposes a method to construct a large-scale OXC, which is scalable and nonblocking at each wavelength. This method is essentially a modularization process of traditional OXCs. There are three phases in this modularization process. In phase 1, we factorize the interconnection network between the input stage and the output stage of the traditional OXC into a set of small-size interconnection networks. In phase 2, we decompose each WSS into a two-stage cascaded structure of small-size WSSs. In phase 3, we combine the small-size interconnection networks with the small-size WSSs to form a set of small-size OXCs. At last, we obtain a modular OXC, which is actually a network of small-size OXCs.

The proposed $N\times N$ OXC possesses the following advantages: 
\begin{enumerate}[(1)]
\item It is modular since it can be constructed from a set of $1\times n$ WSSs and $r\times r$ OXC modules, where $N=nr$; 
\item It is internally nonblocking at each wavelength, and possesses the same routing process as the traditional OXC;
\item Its cabling complexity is small because the number of fiber links is $Nn$, only $2/r$ of that of the traditional OXC;
\item Its physical-layer transmission performance and insertion loss are acceptable, meaning that it is feasible for practical applications.
\end{enumerate}

The rest paper is organized as follows. In Section \ref{Preliminary}, we introduce the structure and property of the traditional OXCs. We propose a three-phase process to modularize the traditional OXC and finally obtain the modular OXC in Section \ref{construction}. We evaluate the performance of the modular OXC in terms of scalability, routing property, and physical-layer transmission performance in Section \ref{evaluation}. Section \ref{conclusion} concludes this paper.

\section{Preliminary} \label{Preliminary}
This section introduces the structure and the function of the classical OXC, since it is the start point of our modular design. As Fig. \ref{fig1} shows, an $N\times N$ OXC consists of $N$ $1\times N$ WSSs and $N$ $N\times 1$ WSSs with $N^2$ fibers in between. Each input (output) of the OXC carries $w$ wavelengths $\lambda_0,\lambda_1,\cdots,\lambda_{w-1}$. We denote an $N\times N$ OXC associated with $w$ wavelengths as $\mathcal{Q}\left(N,w\right)$, where the input WSSs are labelled by $0,1,\cdots,p,\cdots,N-1$ and the output WSSs are numbered by $0,1,\cdots,q,\cdots,N-1$ from top to bottom. As an example, the $6\times6$ OXC in Fig. \ref{fig1} is denoted by $\mathcal{Q}\left(6,3\right)$.
\subsection{$1\times N$ WSS}
There are several kinds of $1\times N$ WSSs, such as micro-electro mechanical system (MEMS) based WSS \cite{12}, liquid crystal on silicon (LCOS) based WSS \cite{13}, and planar lightwave circuit (PLC) based WSS \cite{14}. Herein, we take the PLC-based WSS as an example to elaborate the function of the $1\times N$ WSS.
\begin{figure}[bp]
\centering
\subfigure[]{
 \label{fig2a}
 \includegraphics[scale=0.8]{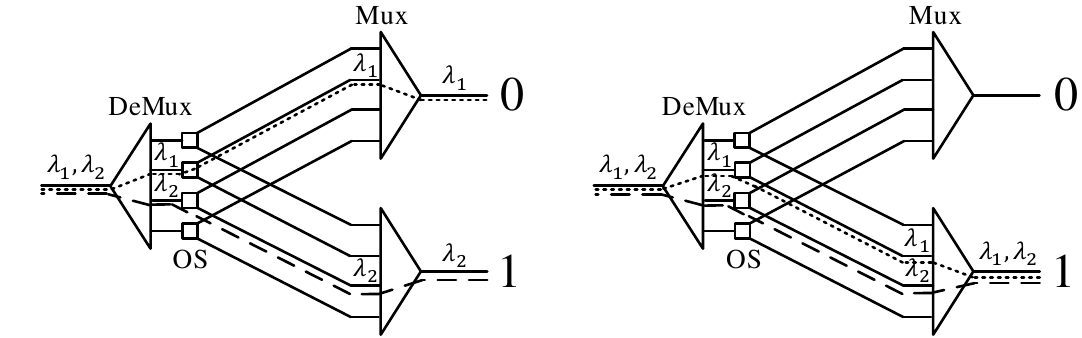}}
\subfigure[]{
 \label{fig2b}
 \includegraphics[scale=0.8]{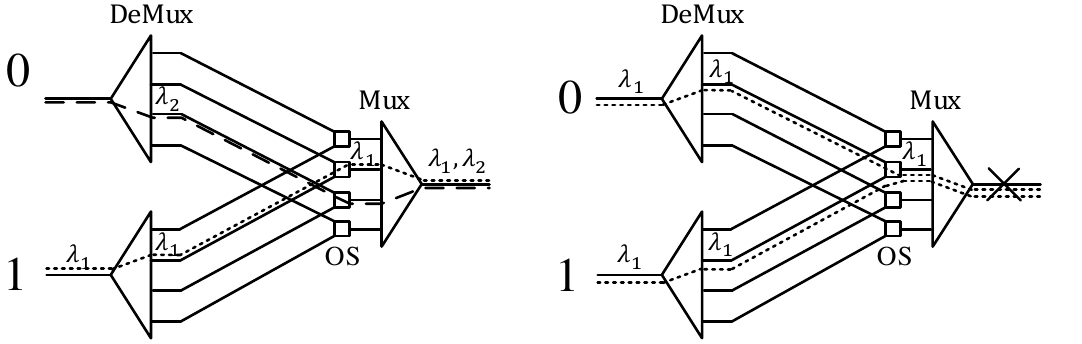}}
\caption{Logical structure of (a) a $1\times2$ WSS and (b) a $2\times1$ WSS.}
\label{fig2}
\end{figure}

The function of a $1\times N$ WSS is to switch wavelengths to their desired outputs. In a PLC-based $1\times N$ WSS, there are a $1\times w$ wavelength demultiplexers (DeMuxs) at the input stage, $w$ $1\times N$ OSs in the middle stage, and $N$ $w\times1$ wavelength multiplexers (Muxs) at the output stage \cite{14}. The input WDM signal is demultiplexed by the $1\times w$ DeMux. Each demultiplexed wavelength is then fed to a $1\times N$ OS attached at each output of the $1\times w$ DeMux. Since each output of $1\times N$ OS connects with an input of a $w\times1$ Mux, it can switch the demultiplexed wavelength to the desired $w\times1$ Mux. In other words, the $1\times N$ WSS switches each wavelength separately by a dedicated $1\times N$ OS such that the switching of different wavelengths is mutually independent \cite{14}. Fig. \ref{fig2a} plots a $1\times2$ PLC-based WSS associated with 4 wavelengths. It can either switch $\lambda_1$ and $\lambda_2$ to outputs 0 and 1, respectively, or route both of them to output 1.

The $N\times1$ WSS is the inverse of a $1\times N$ WSS. As long as no two signals at the same wavelength are fed to the $N\times1$ WSS, it can multiplex the wavelengths from different inputs to the output. At each input, the WDM signal is first demultiplexed and each demultiplexed wavelength is then separately switched by a dedicated $N\times1$ OS to the $N\times1$ Mux. Fig. \ref{fig2b} illustrates the cases with and without contention.
\subsection{Shuffle Interconnection}
The $N^2$ fibers in OXC $\mathcal{Q}\left(N,w\right)$ is actually an $N^{2}\times N^{2}$ interconnection network. In the following, we show that this interconnection network is actually an $N^{2}\times N^{2}$ shuffle network \cite{15}.

Consider an $N^{2}\times N^{2}$ interconnection network, where $N^2$ total inputs can be divided into $N$ groups, each of which contains $N$ inputs, and $N^2$ total outputs can be divided into $N$ groups, each of which contains $N$ outputs.
\newtheorem{definition}{Definition}
\begin{definition}\label{definition1}
An $N^2\times N^2$ interconnection network is an $N^2\times N^2$ shuffle network, denoted as $\mathcal{S}(N)$, if the $q$th input of the $p$th input group connects to the $p$th output of the $q$th output group, where $p$, $q=0,1,\cdots,N-1$.
\end{definition}
In $\mathcal{Q}\left(N,w\right)$, the $N$ outputs of each $1\times N$ WSS and the $N$ inputs of each $N\times1$ WSS naturally constitute an input group and an output group, respectively. In particular, the $q$th output of the $p$th $1\times N$ WSS connects with the $p$th input of the $q$th $N\times1$ WSS. It follows that these $N^2$ fibers is an $N^2\times N^2$ shuffle network. For example, the 36 fibers in OXC $\mathcal{Q}\left(6,3\right)$ in Fig. \ref{fig1} is a $36\times36$ shuffle network $\mathcal{S}(6)$, as Fig. \ref{fig3} shows.

We use two bits to label an input (output) of $\mathcal{S}(N)$, where the first bit and the second bit are called group address and port address, respectively. The $q$th input of the $p$th input group is labelled as $pq$ and the $p$th output of the $q$th output group is labelled as $qp$, where $p$, $q=0,1,\cdots,N-1$. With such a numbering scheme, a shuffle network possesses the following properties:
\begin{enumerate}[P1.]
\item These is exactly one fiber between an input group and an output group;
\item If an input connects with an output, the bit address of the output is the exchange of the two sub-addresses of the input.
\end{enumerate}
\noindent That is, input $pq$ connects with output $qp$. We denote the fiber between input $pq$ and output $qp$ as $f\left(pq,qp\right)$. As Fig. \ref{fig3} illustrates, fiber $f(32,23)$ links input 32 to output 23.

We say an $N^2\times N^2$ interconnection network is functionally equivalent to an $N^2\times N^2$ shuffle network if it has the features P1 and P2.
\begin{figure}
\centering
\includegraphics[scale=0.7]{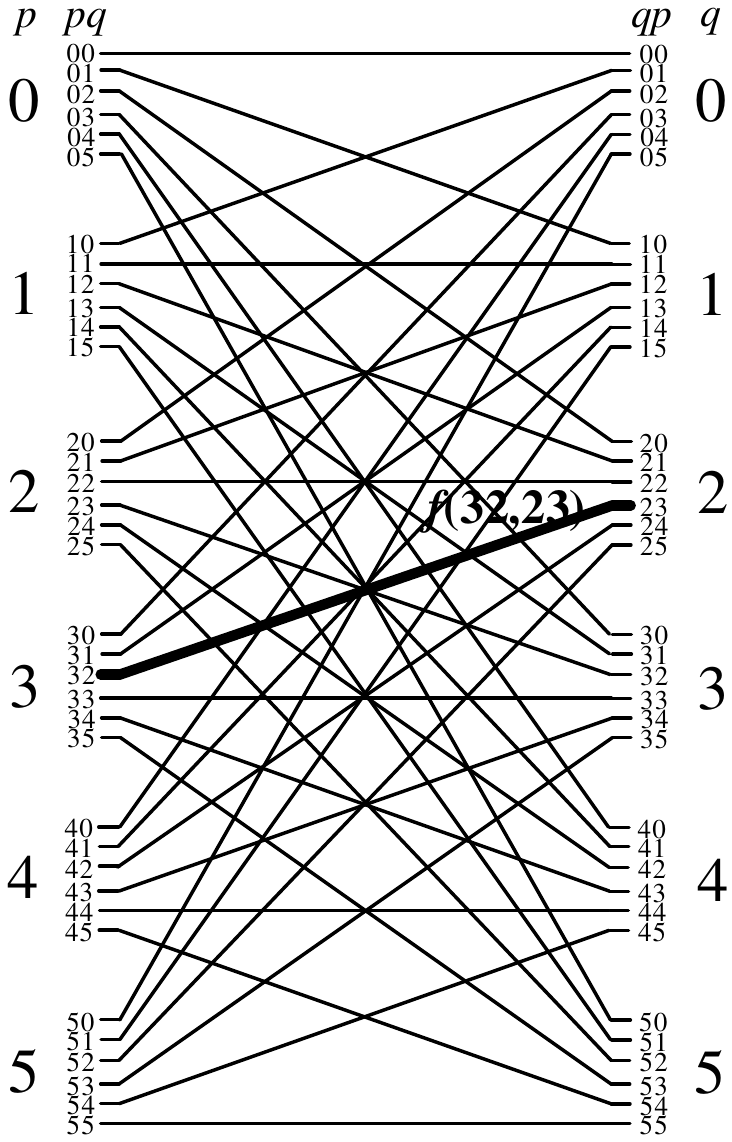}
\caption{A $36\times36$ shuffle network $\mathcal{S}(6)$.}
\label{fig3}
\end{figure}

The interconnection pattern of $\mathcal{S}(N)$ can be delineated by an $N\times N$ connectivity table, denoted by $T$. In the table, the $p$th row and the $q$th column are respectively corresponding to the $p$th input group and the $q$th output group of $\mathcal{S}\left(N\right)$. We denote the intersection of row $p$ and column $q$ as $\left(p,q\right)$, which is corresponding to the fiber $f(pq,qp)$ connecting input $pq$ to output $qp$. We thus fill entry $\left(p,q\right)$ with $pq$, $qp$. As an example, the connectivity table $T$ is given by Fig. \ref{fig4}, where entry $\left(3,2\right)$ is corresponding to fiber $f(32,23)$.
\begin{figure}[bp]
\centering
\includegraphics[scale=0.83]{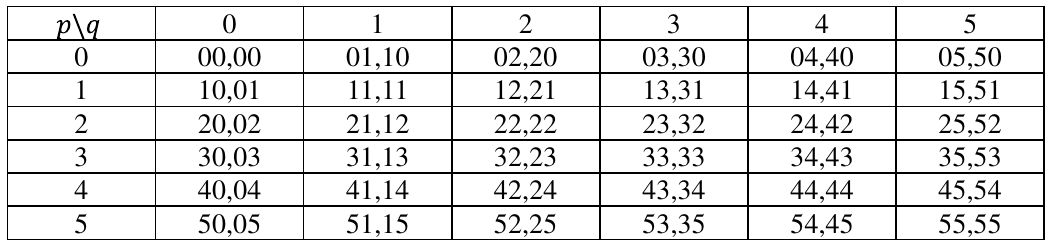}
\caption{Connectivity table $T$ of a $36\times36$ shuffle network $\mathcal{S}(6)$.}
\label{fig4}
\end{figure}
\subsection{Routing Property of OXC $\mathcal{Q}\left(N,w\right)$}
The function of the OXC is to setup a connection at a wavelength from an input to an output. We denote a connection from wavelength $\lambda_i$ at input $p$ to $\lambda_i$ at output $q$ as $R\left(p,q,\lambda_i\right)$, where $i=0,1,\cdots,w-1$. According to the structure of $\mathcal{S}(N)$, $\mathcal{Q}\left(N,w\right)$ has a self-routing property, that is, the path of $R$ is uniquely determined by its input address $p$ and its output address $q$ as follows:
\begin{align}
&\text{ Input of }1\times N \text{ WSS } p \nonumber \\
\rightarrow& \text{ fiber } f(pq,qp) \text{ of } \mathcal{S}(N) \nonumber \\
\rightarrow& \text{ output of } N\times1 \text{ WSS } q. \nonumber
\end{align}

It is clear that the connections at different wavelengths will not conflict with each other, since the $1\times N$ WSS or the $N\times1$ WSS can switch different wavelengths independently. In the following, we show that the OXC can also route the connections at the same wavelength in a nonblocking manner.
\newtheorem{lemma}{Lemma}
\begin{lemma} \label{lemma1}
OXC $\mathcal{Q}\left(N,w\right)$ is nonblocking at each wavelength.
\end{lemma}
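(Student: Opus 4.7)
The plan is to unpack what ``nonblocking at wavelength $\lambda_k$'' means and then verify, one hop at a time, that the self-routing path just described cannot create contention. Formally, I would fix an arbitrary wavelength $\lambda_k$ and an arbitrary set of simultaneously requested connections $\{R(p_i,q_i,\lambda_k)\}$ in which the inputs $p_i$ are pairwise distinct and the outputs $q_i$ are pairwise distinct; this is the only case that needs to be handled, because each input and each output physically carries only one copy of $\lambda_k$. The goal is to show that if every request is sent along its prescribed self-routing path (input WSS $p_i$, fiber $f(p_iq_i,q_ip_i)$ of $\mathcal{S}(N)$, output WSS $q_i$), then no two requests compete for the same physical resource.

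Next I would walk through the three segments of the path and rule out conflicts on each. At input WSS $p_i$, the $1\times N$ WSS uses a dedicated $1\times N$ OS for $\lambda_k$, as described in Section~\ref{Preliminary}, so the $\lambda_k$ component entering input $p_i$ can be steered to any chosen output independently of the other wavelengths; since each input appears in at most one request at $\lambda_k$, only one such request has to be served here. On the fiber segment I would invoke property P1 of the shuffle network: the pair $(p_i,q_i)$ determines the fiber $f(p_iq_i,q_ip_i)$ uniquely, so two distinct requests $R(p_i,q_i,\lambda_k)$ and $R(p_j,q_j,\lambda_k)$ cannot share a fiber. At output WSS $q_i$, the $N\times 1$ WSS is able to multiplex provided no two of its inputs carry $\lambda_k$ simultaneously; since $q_i$ is the target of at most one request at $\lambda_k$, only the fiber $f(p_iq_i,q_ip_i)$ delivers a $\lambda_k$ signal to output WSS $q_i$, so the dedicated $N\times 1$ OS for $\lambda_k$ inside that WSS has a unique signal to forward.

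Combining the three bullets establishes that the self-routing paths for any admissible request set at $\lambda_k$ are mutually disjoint on all shared switching resources, which is exactly the content of the lemma. I do not expect any technical obstacle here: the real insight is simply that the per-wavelength independence of the WSSs (Section~\ref{Preliminary}.A) together with the injective pair-to-fiber labeling of the shuffle (property P1) already hardwires a disjoint route for each input-output pair at every wavelength, and ``hardness'' at this stage would only foreshadow trouble later when the WSSs and shuffle are replaced by the modular constructions of Section~\ref{construction}.
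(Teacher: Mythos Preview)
Your argument is correct and rests on the same two ingredients as the paper's proof---the per-wavelength independence of the WSSs and the uniqueness of the fiber $f(pq,qp)$ between input group $p$ and output group $q$. The only cosmetic difference is framing: the paper argues the strict-sense nonblocking statement by considering the extreme case where $\lambda$ is already busy everywhere except at input $p$ and output $q$ and then observing that $\lambda$ must still be free on $f(pq,qp)$, whereas you equivalently show that the self-routing paths of any admissible batch of requests at $\lambda_k$ are pairwise resource-disjoint.
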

\begin{proof}
Consider an extreme case, where wavelength $\lambda$ is idle only at input $p$ and output $q$. In the following, we show that a connection $R\left(p,q,\lambda\right)$ can still be established in this case.

Wavelength $\lambda$ is free at input $p$, and thus is idle at all the outputs of $1\times N$ WSS $p$. Similarly, $\lambda$ is idle at all the inputs of $N\times1$ WSS. It follows that $\lambda$ is available at fiber $f(pq,qp)$. We thus can build up a connection from input $p$ to output $q$ via wavelength $\lambda$.
\end{proof}
\section{Construction Of Modular OXCs} \label{construction}
As Section \ref{introduction} states, the number of fibers in the classical OXC is the square of the port count of the OXC, and thus increases very fast with the size of OXC. We thus start the modularization of the classical OXC with the factorization of the interconnection network. Specifically, we devise a three-phase modularization process: (1) factorize the $N^2\times N^2$ shuffle network into a set of small-size shuffle networks, (2) decompose each WSS to a cascaded structure of small-size WSSs, and (3) merge the decomposed WSSs with the factorized shuffle networks to form small-size OXCs. In this way, we obtain a modular OXC, which is actually an interconnection network of small-size OXCs.
\subsection{Shuffle Network Factorization}
In this part, we construct an $N^2\times N^2$ modular shuffle network, denoted by $\hat{\mathcal{S}}\left(n\times r\right)$, using a set of $r^2\times r^2$ shuffle networks $\mathcal{S}(r)$, where $N=nr$ and $n=1,2,\cdots$. In the target shuffle network $\hat{\mathcal{S}}\left(n\times r\right)$, there are $nr$ input groups, each containing $nr$ inputs, and $nr$ output groups, each including $nr$ outputs. The port count of $\hat{\mathcal{S}}\left(n\times r\right)$ is $n^2$ times that of  $\mathcal{S}(r)$. If we use $n^2$ $r^2\times r^2$  $\mathcal{S}(r)$s to construct $\hat{\mathcal{S}}\left(n\times r\right)$, the inputs and the outputs of each $r^2\times r^2$ shuffle network  $\mathcal{S}(r)$ should be the inputs and the outputs of $\hat{\mathcal{S}}\left(n\times r\right)$, respectively. Also, every $nr$ inputs (outputs) of $r^2\times r^2$ shuffle network  $\mathcal{S}(r)$ should be grouped together as an input (output) group. To achieve this goal, we use an approach similar to that in \cite{16}, which factorizes the AWG-based shuffle networks with the help of connectivity table.

We denote the connectivity table of $\hat{\mathcal{S}}\left(n\times r\right)$ as $\hat{T}$, where each row and each column are corresponding to an input group and an output group of $\hat{\mathcal{S}}\left(n\times r\right)$, respectively. According to properties of shuffle networks and the above discussion, a legitimate table $\hat{T}$ should satisfy the following conditions:
\begin{enumerate}[R1]
\item Each entry represents one and only one fiber connecting a pair of input group and output group;
\item Each entry contains two addresses, and one address is the inverse of the two sub-address of the other one;
\item $\hat{T}$ can be divided to $n^2$ sub-tables, each of which is the connectivity table of an $r^2\times r^2$ shuffle network $\mathcal{S}(r)$.
\end{enumerate}
\begin{figure}[bp]
\centering
\includegraphics[scale=0.78]{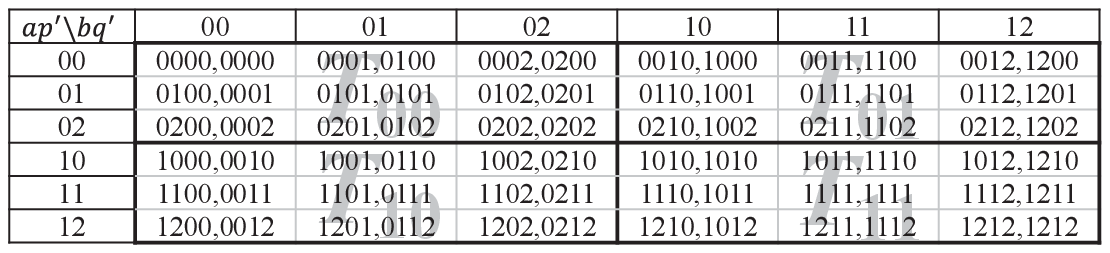}
\caption{Connectivity table $\hat{T}$ of a $36\times36$ modular shuffle network $\hat{\mathcal{S}}\left(2\times3\right)$.}
\label{fig5}
\end{figure}

In the following, we show that such a table $\hat{T}$ can be derived from the connectivity table $T$ of an $N^2\times N^2$ shuffle network $\mathcal{S}\left(n\times r\right)$. Recall that $T$ is an $N\times N$ table. We first apply a modulo$-r$ operation to each number in $T$. The row index $p$ and the column index $q$ change to $p^\prime=[p]_r$ and $q'=[q]_r$, respectively, where $[X]_Y\triangleq X$ mod $Y$, $p,q=0,1,\cdots,N-1$ and $p',q'=0,1,\cdots,r-1$. Accordingly, entry $\left(p,q\right)$ varies from $pq$, $qp$ to $p^\prime q^\prime$, $q' p'$. Thus, in the table after the modulo$-r$ operation, the row index $p'$ periodically increases from 0 to $r-1$, and there are $n$ periods in total. The same thing happens with the column index $q'$. Every $r$ rows and $r$ columns in a period form a sub-table, which is exactly the connectivity table of an $r^2\times r^2$ shuffle network $\mathcal{S}(r)$. In other words, the new table contains $n^2$ identical $r\times r$ sub-tables. To distinguish these sub-tables, we add one bit, denoted by $a$, before the row index $p'$, and one bit, denoted by $b$, before the column index $q'$, where $a,b=0,1,\cdots,n-1$. After these operations, Table $T$ has the following changes:
\begin{enumerate}[a)]
\item Row $p$ changes to row $ap'$ since it is the $p'$th row in the $a$th period, where $p=ar+p'$;
\item Column $q$ changes to column $bq'$ since it is the $q'$th column in the $b$th period, where $q=br+q'$;
\item The content in entry $\left(p,q\right)$ changes to $ap'bq'$, $bq'ap'$.
\end{enumerate}
\noindent This yields a new table, denoted by $\hat{T}$. Also, we denote the sub-table identified by $a$ and $b$ as $T_{ab}$. For example, after applying the modulo operations to table $T$ in Fig. \ref{fig4}, we obtain the connectivity table in Fig. \ref{fig5}. It is very easy to check that table $\hat{T}$ satisfies requirements R1, R2 and R3.
\begin{figure}[tp]
\centering
\includegraphics[scale=0.65]{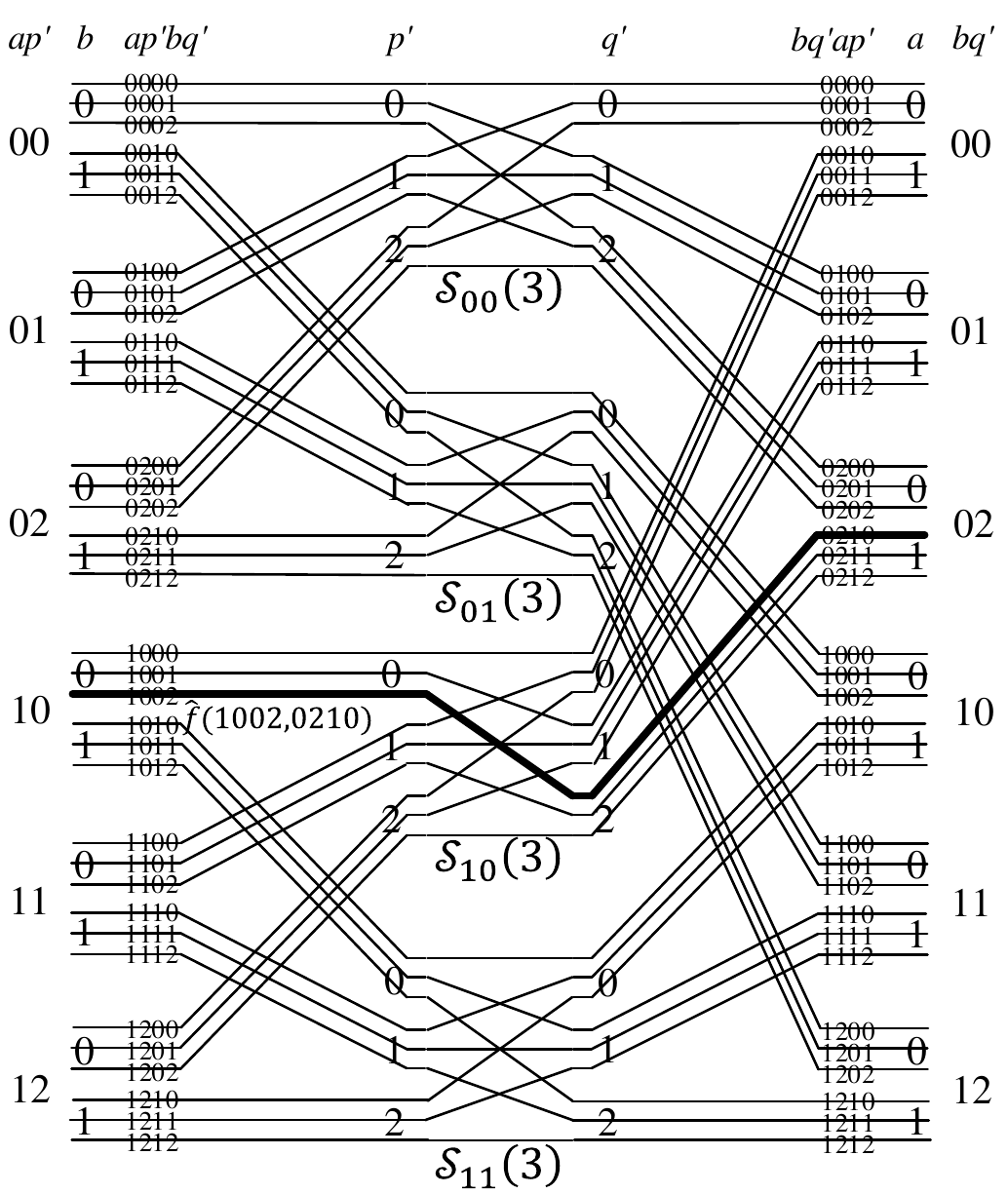}
\caption{A $36\times36$ modular shuffle network $\hat{\mathcal{S}}\left(2\times3\right)$.}
\label{fig6}
\end{figure}

According to table $\hat{T}$, we can construct modular shuffle network $\hat{\mathcal{S}}\left(n\times r\right)$ as follows:
\begin{enumerate}[a)]
\item For row $ap'$, create an input group $ap'$, which includes $nr$ inputs, and divide this group into $n$ subgroups, each of which contains $r$ inputs, where the $q'$th input in the $b$th subgroup is labelled as $ap'bq'$;
\item For column $bq'$, create an output group $bq'$, which includes $nr$ outputs, and divide this group into $n$ subgroups, each of which contains $r$ outputs, where the $p'$th output in the $a$th subgroup is labelled as $bq'ap'$;
\item Layout $n^2$ $r^2\times r^2$ shuffle sub-networks vertically, where the $k$th sub-network is labelled as $\mathcal{S}_{ab}(r)$ if $k=an+b$;
\item Link the ports of input subgroup $ap'b$ to the ports of input group $p'$ of $\mathcal{S}_{ab}(r)$, since the $b$th period of the $ap'$th row in $\hat{T}$ is the $p'$th row of $T_{ab}$;
\item Link the ports of output subgroup $bq'a$ to the ports of output group $q'$ of $\mathcal{S}_{ab}(r)$, since the $a$th period of the $bq'$th column in $\hat{T}$ is the $q'$th column of $T_{ab}$.
\end{enumerate}

The network $\hat{\mathcal{S}}\left(2\times3\right)$ constructed from table $\hat{T}$ in Fig. \ref{fig5} is plotted in Fig. \ref{fig6}. We show that table $\hat{T}$ is indeed the connectivity table of $\hat{\mathcal{S}}\left(2\times3\right)$. Consider input group $ap'=10$ and output group $bq'=02$ of $\hat{\mathcal{S}}\left(2\times3\right)$. Input group 10 connects with sub-networks $\mathcal{S}_{10}(3)$ and $\mathcal{S}_{11}(3)$, and output group 02 connects with $\mathcal{S}_{00}(3)$ and $\mathcal{S}_{10}(3)$. Thus, the fiber between input group 10 and output group 02 must pass through $\mathcal{S}_{10}(3)$. That is, the trace of the fiber is as follows:
\begin{align}
&\text{ Input subgroup } ap'b=100 \nonumber \\
\rightarrow& \text{ input group }p'=0 \text{ of } \mathcal{S}_{ab} (r)=\mathcal{S}_{10} (3) \nonumber \\
\rightarrow& \text{ output group } q'=2 \text{ of } \mathcal{S}_{ab} (r)=\mathcal{S}_{10} (3) \nonumber \\
\rightarrow& \text{ output subgroup } bq' a=021. \nonumber
\end{align}
As Fig. \ref{fig6} plots, port $q'=2$ of input subgroup $ap' b=100$ (i.e., input $ap'bq'=1002$) connects with input $p' q'=02$ of $\mathcal{S}_{10} (3)$, and port $p'=0$ of output subgroup $bq'a=021$ (i.e., output $ap'bq'=0210$) connects with output $q'p'=20$ of $\mathcal{S}_{10}(3)$. According to property P2 of shuffle networks, input $p'q'=02$ of $\mathcal{S}_{10}(3)$ connects with output $q'p'=20$. Thus, input $ap'bq'=1002$ connects with output $bq'ap'=0210$. Accordingly, the content in entry $(10,02)$ of table $\hat{T}$ in Fig. \ref{fig5} is 1002, 0210. This example clearly shows that there is only one fiber, denoted as $\hat{f}\left(ap'bq',bq'ap'\right)$, connecting input group $ap'$ and output group $bq'$, and the input address $ap'bq'$ is the inverse of the two sub-addresses of the output address $bq'ap'$. This indicates that $\hat{\mathcal{S}}\left(n\times r\right)$ has the following property.
\begin{figure}[tp]
\centering
\includegraphics[scale=0.65]{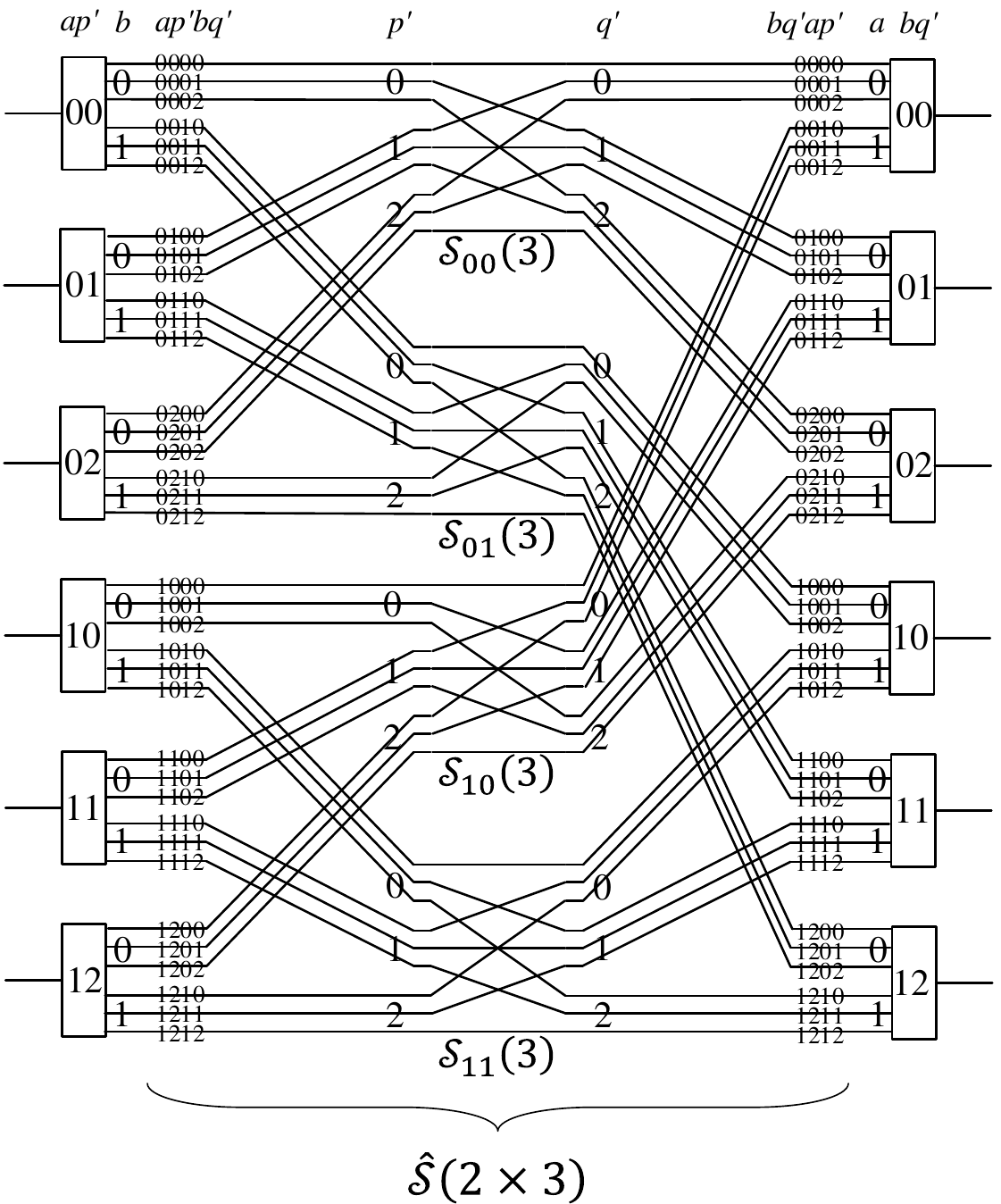}
\caption{A $6\times6$ switching network $Q'\left(2\times3,3\right)$.}
\label{fig7}
\end{figure}
\begin{lemma} \label{lemma2}
Interconnection network $\hat{\mathcal{S}}\left(n\times r\right)$ is equivalent to shuffle network $\mathcal{S}(N)$ in terms of connectivity, where $N=nr$.
\end{lemma}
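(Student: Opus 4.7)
The plan is to verify directly that $\hat{\mathcal{S}}\left(n\times r\right)$ satisfies the two defining properties P1 and P2 of a shuffle network listed just before Definition 1, since the excerpt explicitly says that any interconnection network with P1 and P2 is functionally equivalent to $\mathcal{S}(N)$. I would use the construction steps a)--e) that define $\hat{\mathcal{S}}\left(n\times r\right)$ from $\hat{T}$, together with the fact that each sub-network $\mathcal{S}_{ab}(r)$ is itself a genuine $r^2\times r^2$ shuffle network and hence already satisfies P1 and P2 internally.

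First, I would establish P1 for $\hat{\mathcal{S}}\left(n\times r\right)$. Fix an input group $ap'$ and an output group $bq'$. By step d), input group $ap'$ is attached only to the $n$ sub-networks $\mathcal{S}_{ab}(r)$ with $b$ ranging over $0,1,\cdots,n-1$; symmetrically, by step e), output group $bq'$ is attached only to $\mathcal{S}_{ab}(r)$ with $a$ varying. Hence the unique sub-network that touches both is $\mathcal{S}_{ab}(r)$ for the specific pair $(a,b)$. Inside this $\mathcal{S}_{ab}(r)$, property P1 of the ordinary shuffle network $\mathcal{S}(r)$ guarantees exactly one fiber between its input group $p'$ and its output group $q'$. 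Pulling this back through the labeling in d) and e) gives exactly one fiber between $ap'$ and $bq'$ in $\hat{\mathcal{S}}\left(n\times r\right)$, which is P1.

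Next, I would establish P2 by tracing that unique fiber explicitly, imitating the worked example $ap'=10$, $bq'=02$ given just before Lemma 2. The $q'$th port of input subgroup $ap'b$, labelled $ap'bq'$, is wired by step d) to input $p'q'$ of $\mathcal{S}_{ab}(r)$. Applying P2 to $\mathcal{S}_{ab}(r)$, this input connects to output $q'p'$ of $\mathcal{S}_{ab}(r)$, which by step e) is the $p'$th port of output subgroup $bq'a$, labelled $bq'ap'$. Thus the fiber $\hat{f}(ap'bq',bq'ap')$ realizes exactly the address exchange required by P2 at the level of the two-bit sub-addresses $ap'$ and $bq'$.

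The argument involves no hard estimate; the one place where care is needed is making sure the labeling bookkeeping is internally consistent, i.e. that the sub-addresses $(a,p')$ and $(b,q')$ introduced by the modulo construction play the role of group/port addresses at the outer level while $p'$ and $q'$ simultaneously play that role inside each $\mathcal{S}_{ab}(r)$. I would therefore spend most of the write-up nailing down this two-level address convention and then simply quote P1 and P2 for $\mathcal{S}(r)$ to conclude. Once both P1 and P2 are verified, the equivalence with $\mathcal{S}(N)$ claimed in Lemma 2 follows from the definition of functional equivalence given right after the properties.
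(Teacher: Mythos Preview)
Your proposal is correct and is essentially the same argument as the paper's, only organized differently: you verify P1 and P2 for $\hat{\mathcal{S}}(n\times r)$ directly by tracing the unique fiber through the sub-network $\mathcal{S}_{ab}(r)$, whereas the paper packages the same trace as a bijection $\hat{f}(ap'bq',bq'ap')\leftrightarrow f(pq,qp)$ via the address encoding $p=ar+p'$, $q=br+q'$. The paper's version is terser because it leans on the worked example preceding the lemma for the existence and uniqueness of $\hat{f}$, which you instead derive explicitly from P1 and P2 of each $\mathcal{S}_{ab}(r)$; your write-up would therefore be more self-contained but otherwise equivalent.
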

\begin{proof}
Consider fiber $\hat{f}\left(ap'bq',bq'ap'\right)$ in $\hat{\mathcal{S}}\left(n\times r\right)$. If $p=ar+p'$ and $q=br+q'$, we have the following one-to-one correspondence:
\begin{equation}
p,q\leftrightarrow ap',bq'\leftrightarrow \hat{f}\left(ap'bq',bq'ap'\right)
\end{equation}
According to properties P1 and P2 of $\mathcal{S}(N)$, there is also a one-to-one correspondence:
\begin{equation}
p,q\leftrightarrow f\left(pq,qp\right)
\end{equation}
Therefore, there is a one-one and onto mapping between $\hat{f}\left(ap'bq',bq'ap'\right)$ in $\hat{\mathcal{S}}\left(n\times r\right)$ and $f(pq,qp)$ in $\mathcal{S}(N)$:
\begin{equation}
\hat{f}\left(ap'bq',bq'ap'\right)\leftrightarrow f\left(pq,qp\right)
\end{equation}
This clearly demonstrates the equivalence between $\hat{\mathcal{S}}\left(n\times r\right)$ and $\mathcal{S}(N)$ in terms of connectivity.
\end{proof} 
\noindent We thus call $\hat{\mathcal{S}}\left(n\times r\right)$ modular shuffle network.
\begin{figure}[bp]
\centering
\includegraphics[scale=0.9]{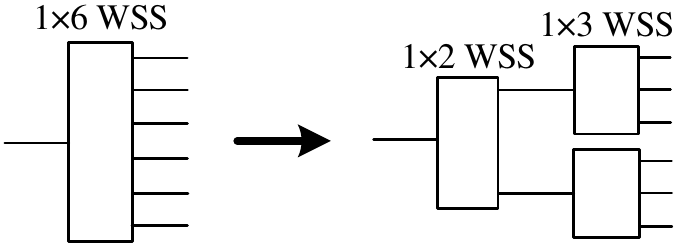}
\caption{A $1\times6$ WSS is equivalent to a $1\times2$ WSS cascaded by 2 $1\times3$ WSSs.}
\label{fig8}
\end{figure}

We replace $\mathcal{S}(N)$ with $\hat{\mathcal{S}}\left(n\times r\right)$ in OXC $\mathcal{Q}(N,w)$, and obtain a new switching network, called $\mathcal{Q}'\left(n\times r,w\right)$. For example, Fig. 7 is the switching network called $\mathcal{Q}'\left(2\times 3,3\right)$, which is produced by replacing shuffle network $\mathcal{S}(6)$ in Fig. \ref{fig1} with modular shuffle network $\hat{\mathcal{S}}\left(2\times3\right)$ in Fig. \ref{fig6}.

As Fig. \ref{fig7} illustrates, the $p$th $1\times N$ WSS that connects with input group $p$ of $\mathcal{S}(N)$ in $\mathcal{Q}(N,w)$ now connects with input group $ap'$ of $\hat{\mathcal{S}}\left(n\times r\right)$ in $\mathcal{Q}'\left(n\times r,w\right)$, where $p=ar+p'$. We thus relabel the $1\times N$ WSS as $ap'$. Similarly, we relabel $N\times1$ WSS $q$ as $bq'$, where $q=br+q'$. For example, $1\times N$ WSS 3 that connects with input group 3 in Fig. \ref{fig1} now connects with input group 10 of $\hat{\mathcal{S}}\left(2\times3,3\right)$ in Fig. \ref{fig7}, and thus is relabeled as 10. Similarly, $N\times1$ WSS 2 in Fig. \ref{fig1} is relabeled as 02.
\subsection{WSS Decomposition}
As \cite{17} mentions, a $1\times N$ WSS is functionally equivalent to a $1\times n$ WSS, each output of which is cascaded by a $1\times r$ WSS, where $N=nr$. Fig. \ref{fig8} illustrates that a $1\times 6$ WSS can be replaced by a $1\times 2$ WSS cascaded by 2 $1\times 3$ WSSs.
\begin{figure}[tp]
\centering
\includegraphics[scale=0.59]{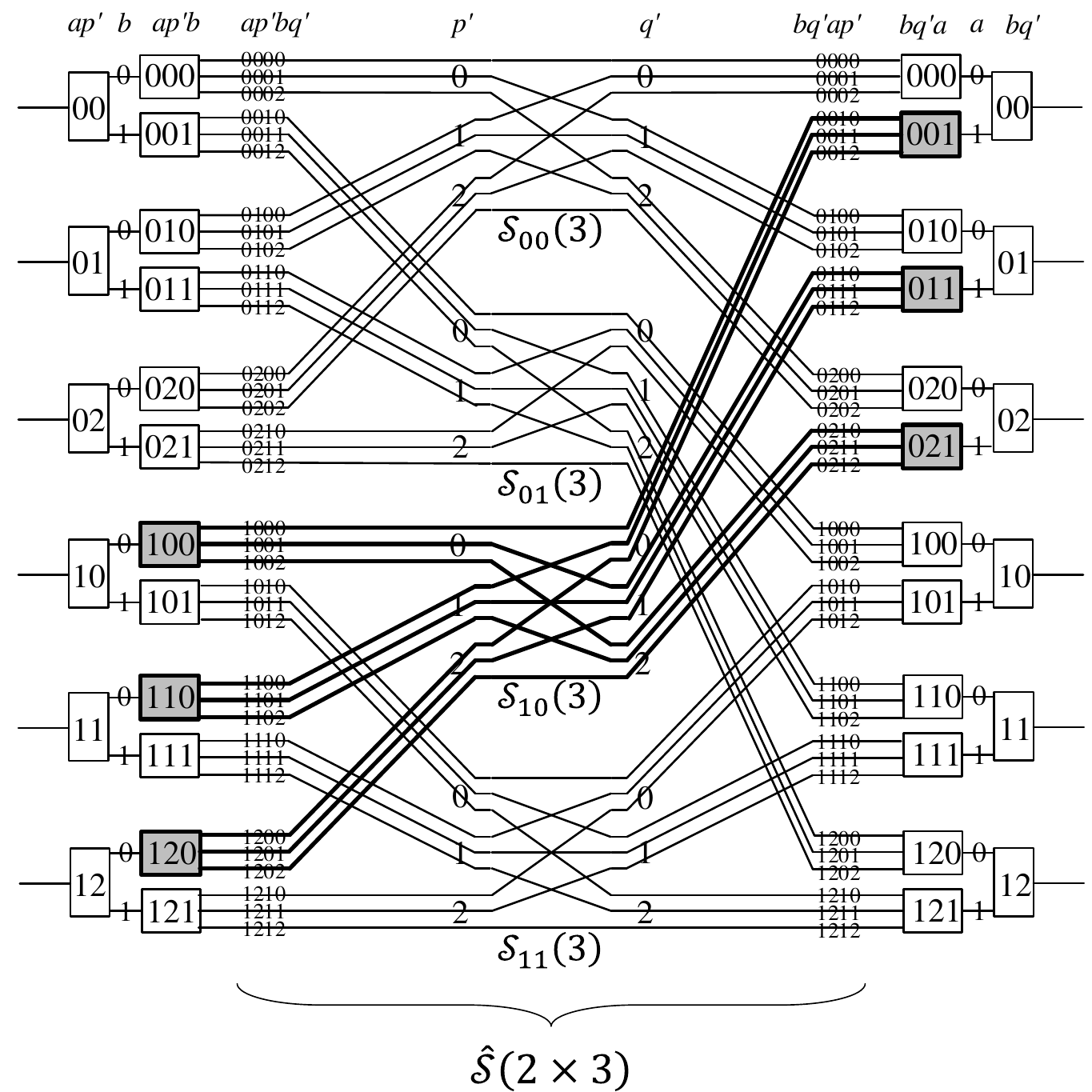}
\caption{A $6\times6$ switching network $\mathcal{Q}''\left(2\times3,w\right)$.}
\label{fig9}
\end{figure}

We replace each $1\times N$ WSS in $\mathcal{Q}'\left(n\times r,w\right)$ by a $1\times n$ WSS cascaded by $n$ $1\times r$ WSSs, and obtain a new switching network $\mathcal{Q}''\left(n\times r,w\right)$. After the replacement, $\mathcal{Q}'\left(2\times3,w\right)$ in Fig. \ref{fig7} changes to $\mathcal{Q}''\left(2\times3,w\right)$ in Fig. \ref{fig9}.

As Fig. \ref{fig9} plots, the $1\times n$ WSSs and the $n\times1$ WSSs become the inputs and the outputs of $\mathcal{Q}''\left(n\times r,w\right)$. We label the $1\times n$ WSS decomposed from $1\times N$ WSS $ap'$ in $\mathcal{Q}'\left(n\times r,w\right)$ as $ap'$, and the $1\times r$ WSS attached at the $b$th output of $1\times n$ WSS $ap'$ as $ap'b$, where $b=0,1,\cdots,n-1$. We label the $n\times1$ WSSs at the outputs in the same way. Clearly, $1\times r$ WSS $ap'b$ connects with input sub-group $ap'b$ of $\hat{\mathcal{S}}\left(n\times r\right)$ and thus input group $p'$ of $\mathcal{S}_{ab}(r)$. Similarly, $r\times1$ WSS $bq'a$ connects with output sub-group $bq'a$ of $\hat{\mathcal{S}}\left(n\times r\right)$ and thus output group $q'$ of $\mathcal{S}_{ab}(r)$. For example, $1\times2$ WSS 10 in Fig. \ref{fig9} is decomposed from $1\times6$ WSS $ap'=10$ in Fig. \ref{fig7}. The $1\times 3$ WSS at the 0th output of $1\times 2$ WSS 10 is labelled as $ap'b=100$, and it connects with input sub-group 100 of $\hat{\mathcal{S}}\left(2\times3\right)$ and thus input group $p'=0$ of $\mathcal{S}_{10}(3)$.

\subsection{Merge Operation}
According to the description in Section \ref{construction}-B, a $r^2\times r^2$ shuffle subnetwork $\mathcal{S}_{ab}(r)$ connects with $1\times r$ WSSs $a0b,a1b,\cdots,a(r-1)b$ and $r\times1$ WSSs $b0a,b1a,\cdots,b(r-1)a$. It is interesting to see that these components actually constitute a $r\times r$ OXC $\mathcal{Q}(r,w)$. For instance, $\mathcal{S}_{10}(3)$, $1\times3$ WSSs 100,110,120, and $3\times1$ WSSs 001,011,021 in Fig. \ref{fig9} composes a $3\times3$ OXC $\mathcal{Q}(3,3)$.

We thus perform $n^2$ merge operations. In the $k$-th merge, we combine $r^2\times r^2$ shuffle subnetwork $\mathcal{S}_{ab}(r)$, $1\times r$ WSSs $a0b,a1b,\cdots,a(r-1)b$ and $r\times1$ WSSs $b0a,b1a,\cdots,b(r-1)a$ to a $r\times r$ OXC, labelled by $\mathcal{Q}_{ab}(r)$, where $k=an+b$. It is clear that $1\times r$ WSSs $ap'b$ is now the $p'$th input of $\mathcal{Q}_{ab}(r)$ and $r\times1$ WSSs $bq'a$ is the $q'$th output of $\mathcal{Q}_{ab}(r)$. As a result, we obtain a modular $N\times N$ OXC, which is actually an interconnection network of $N$ $1\times n$ WSSs, $n^2$ $r\times r$ OXCs, and $N$ $n\times1$ WSSs. We denote this modular OXC as $\hat{\mathcal{Q}}\left(n\times r,w\right)$. For example, after $2^2=4$ merge operations, $\mathcal{Q}''\left(2\times3,3\right)$ changes to $\hat{\mathcal{Q}}\left(2\times 3,3\right)$ in Fig. 10. Also, $1\times3$ WSSs 100 and $3\times1$ WSS 021 are now the 0th input and the 2nd output of $\mathcal{Q}_{10}(3)$, respectively.
\begin{figure}
\centering
\includegraphics[scale=0.65]{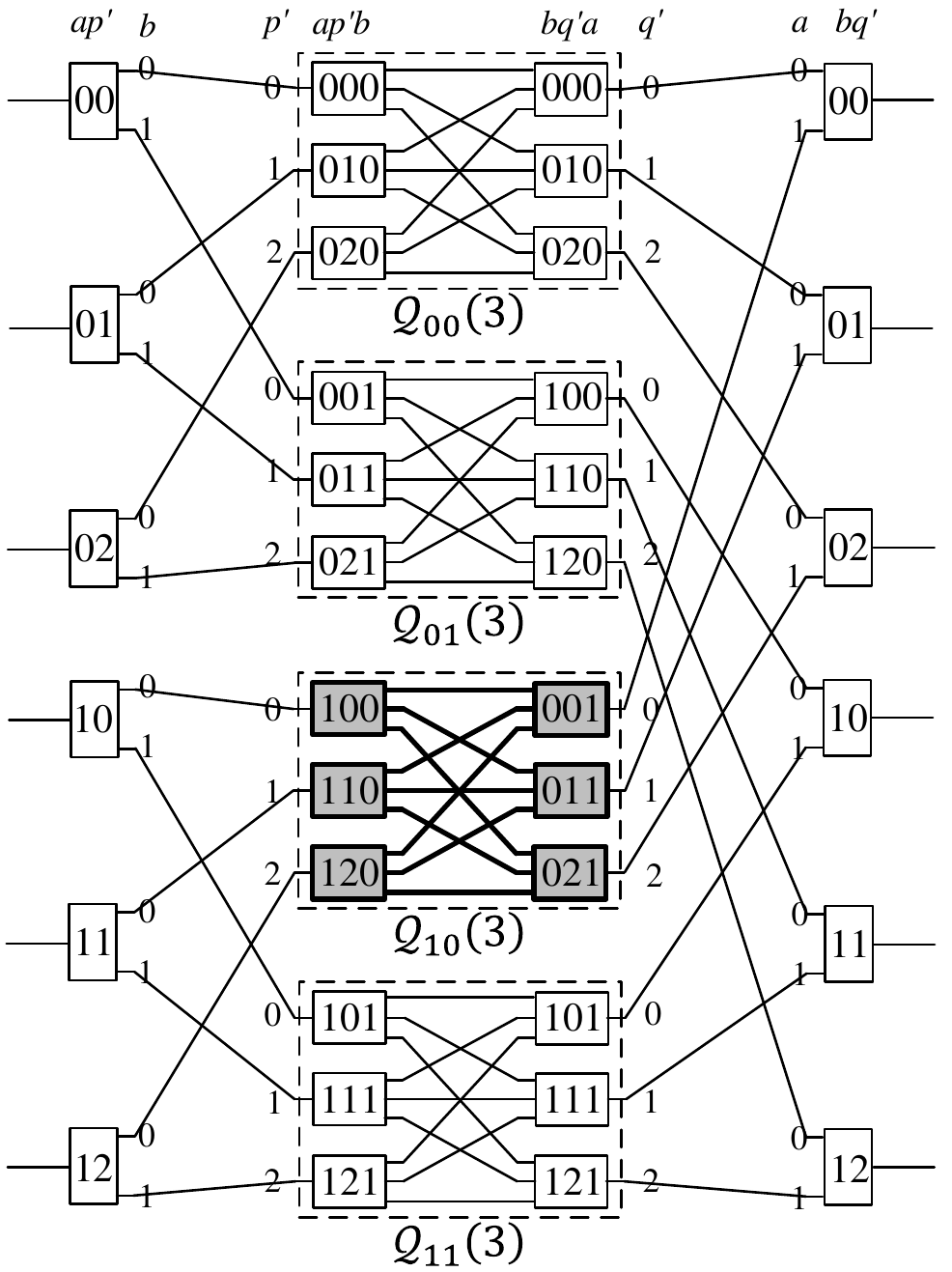}
\caption{A $6\times6$ modular OXC $\hat{\mathcal{Q}}\left(2\times3,w\right)$.}
\label{fig10}
\end{figure}
\begin{figure}[bp]
\centering
\includegraphics[scale=0.65]{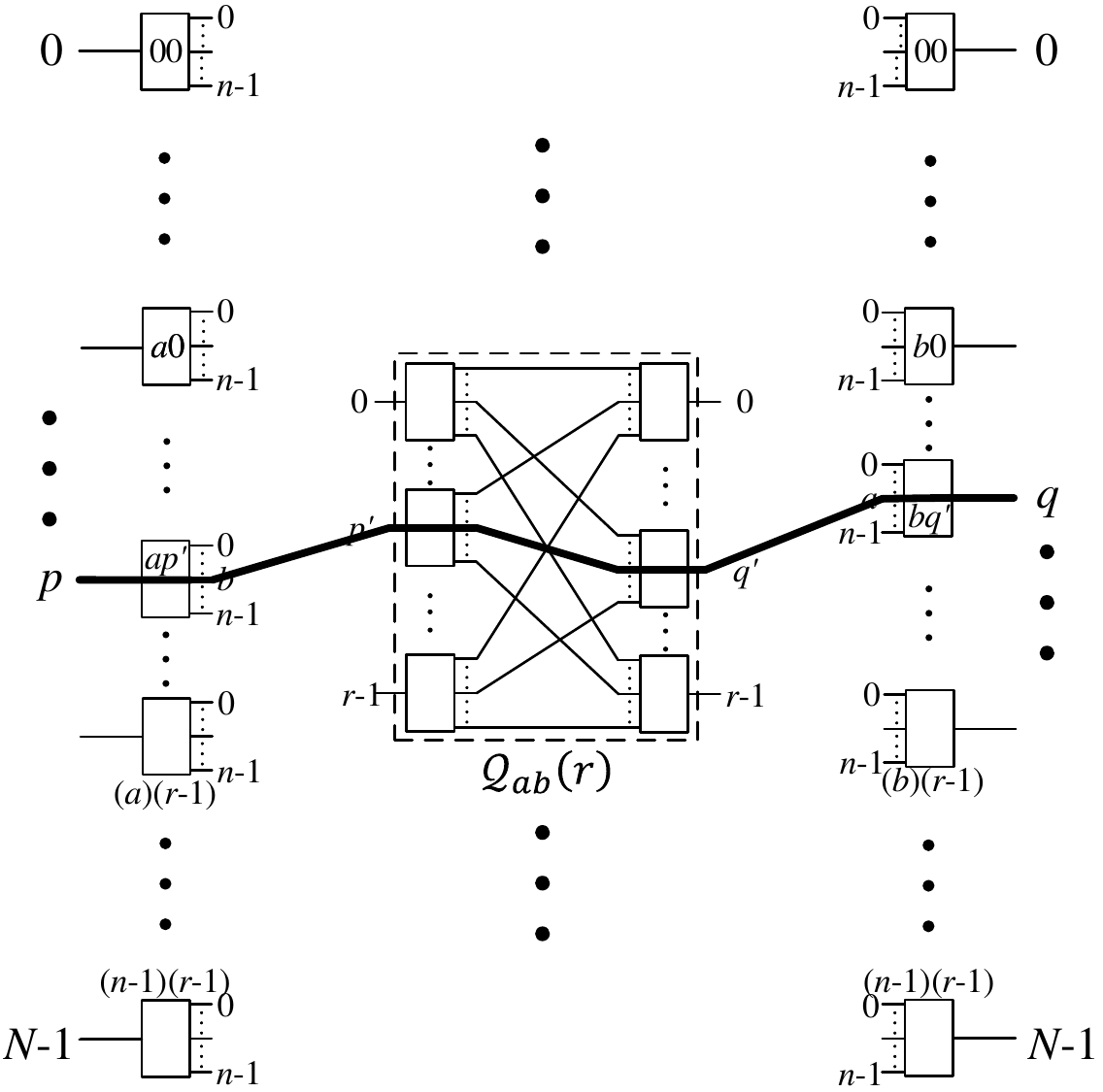}
\caption{An $N\times N$ modular OXC $\hat{\mathcal{Q}}\left(n\times r,w\right)$.}
\label{fig11}
\end{figure}

Fig. \ref{fig11} plots an $N\times N$ modular OXC $\hat{\mathcal{Q}}\left(n\times r,w\right)$, where $N=nr$. In particular, the connectivity of $\hat{\mathcal{Q}}\left(n\times r,w\right)$ is as follows:
\begin{enumerate}[1)]
\item The $p$th $1\times n$ WSS at the input stage is labelled by $ap'$, and the $q$th $n\times1$ WSS at the output stage is labelled by $bq'$, where $a,b=0,1,\cdots,n-1$, $p',q'=0,1,\cdots,r-1$, $p=ar+p'$ and $q=br+q'$;
\item The $b$th output of $1\times n$ WSS $ap'$ connects with the $p'$th input of $\mathcal{Q}_{ab}(r)$;
\item The $a$th input of $n\times 1$ WSS $bq'$ connects with the $q'$th output of $\mathcal{Q}_{ab}(r)$;
\end{enumerate}
\section{Performance Evaluation} \label{evaluation}
In this section, we evaluate the performance of $N\times N$ OXC $\hat{\mathcal{Q}}\left(n\times r,w\right)$, where $N=nr$. We analyze the routing property in Section \ref{evaluation}-A, simulate the physical-layer performance in Section \ref{evaluation}-B, and discuss the scalability in Section \ref{evaluation}-C.
\subsection{Nonblocking Routing Property}
Let $\hat{R}\left(ap',bq',\lambda_i\right)$ be a connection at wavelength $\lambda_i$ from $1\times n$ WSS $ap'$ to $n\times1$ WSS $bq'$ in $\hat{\mathcal{Q}}\left(n\times r,w\right)$, where $i=0,1,\cdots,w-1$. According to the connectivity of $\hat{\mathcal{Q}}\left(n\times r,w\right)$ in Section \ref{construction}-C, it is easy to know that the path of $\hat{R}$ is uniquely determined by the input address $ap'$ and the output address $bq'$ as follows:
\begin{align}
&\text{ Input of } 1\times n \text{ WSS } ap'\nonumber\\ 
\rightarrow& \text{ output } b \text{ of } 1\times n \text{ WSS } ap'\nonumber\\ 
\rightarrow& \text{ input } p' \text{ of } \mathcal{Q}_{ab}(r) \nonumber\\ 
\rightarrow& \text{ output } q' \text{ of } \mathcal{Q}_{ab}(r)\nonumber\\ 
\rightarrow& \text{ input } a \text{ of } n\times1 \text{ WSS } bq' \nonumber\\ 
\rightarrow& \text{ output of } n\times1 \text{ WSS } bq'.\nonumber
\end{align}
This indicates that $\hat{\mathcal{Q}}\left(n\times r,w\right)$ has a self-routing property similar to that of the classical OXC.

Since the $1\times n$ WSSs or $1\times r$ WSSs can switch different wavelengths independently, the connections at different wavelengths do not conflict with each other. In the following, we show that the connections at the same wavelength will not suffer internal contentions in $\hat{\mathcal{Q}}\left(n\times r,w\right)$.
\newtheorem{theorem}{Theorem}
\begin{theorem}\label{theorem1}
The modular OXC $\hat{\mathcal{Q}}\left(n\times r,w\right)$ is nonblocking at each wavelength.
\end{theorem}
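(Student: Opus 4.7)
My plan is to mimic the structure of the proof of Lemma~\ref{lemma1}, but applied hierarchically, exploiting the self-routing path description already given just above the theorem statement and the nonblocking property of the small constituent OXC $\mathcal{Q}_{ab}(r)$ established by Lemma~\ref{lemma1}.

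First I would reduce to the worst case, as in Lemma~\ref{lemma1}: assume $\lambda$ is idle \emph{only} at input $ap'$ and output $bq'$ of $\hat{\mathcal{Q}}(n\times r,w)$, and aim to route the connection $\hat{R}(ap',bq',\lambda)$. The self-routing description preceding the theorem pins down the path uniquely; the only freedom is whether the wavelength $\lambda$ is available on each segment of that fixed path.

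Next I would propagate the idleness of $\lambda$ through the boundary WSSs using the wavelength-independence of WSS switching (the same property invoked in Lemma~\ref{lemma1}). Because $\lambda$ is idle at the input of the $1\times n$ WSS $ap'$, it is idle at each of its $n$ outputs, and hence at the $b$th output, which is exactly input $p'$ of $\mathcal{Q}_{ab}(r)$. Symmetrically, since $\lambda$ is idle at the output of the $n\times 1$ WSS $bq'$, it is idle at each of its $n$ inputs, in particular at the one wired to output $q'$ of $\mathcal{Q}_{ab}(r)$. Thus $\lambda$ is free both at input $p'$ and at output $q'$ of the internal $r\times r$ OXC $\mathcal{Q}_{ab}(r)$.

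Finally I would invoke Lemma~\ref{lemma1} on the small module $\mathcal{Q}_{ab}(r)$: being a classical $r\times r$ WSS-based OXC, it is nonblocking at each wavelength, so a connection at $\lambda$ from its input $p'$ to its output $q'$ can be established. Concatenating the three fixed segments (the $1\times n$ WSS $ap'$, the internal route through $\mathcal{Q}_{ab}(r)$, and the $n\times 1$ WSS $bq'$) yields $\hat{R}(ap',bq',\lambda)$, proving the theorem. I do not anticipate a serious obstacle here; the only subtle point is making clear that the self-routing property forces traffic between input group $ap'$ and output group $bq'$ to use the specific module $\mathcal{Q}_{ab}(r)$, so that the nonblocking guarantee of Lemma~\ref{lemma1} on that single module suffices and no contention can arise in any other module.
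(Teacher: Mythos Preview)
Your proposal is correct and follows essentially the same approach as the paper's own proof: reduce to the extremal case where $\lambda$ is idle only at input $ap'$ and output $bq'$, propagate idleness of $\lambda$ through the boundary $1\times n$ and $n\times 1$ WSSs to input $p'$ and output $q'$ of $\mathcal{Q}_{ab}(r)$, and then invoke Lemma~\ref{lemma1} on that module. Your added remark about self-routing pinning down the single module $\mathcal{Q}_{ab}(r)$ makes explicit a point the paper leaves implicit, but otherwise the arguments coincide.
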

\begin{proof}
Consider an extreme case, where wavelength $\lambda$ is idle only at input $ap'$ and output $bq'$. In the following, we show that $\hat{R}\left(ap',bq',\lambda\right)$ can be built up in this case.

Wavelength $\lambda$ is idle at input $ap'$, which implies that $\lambda$ is idle at all the outputs of $1\times n$ WSS. Similarly, $\lambda$ is idle at all the inputs of $n\times1$ WSS $bq'$. According to the connectivity of $\hat{\mathcal{Q}}\left(n\times r,w\right)$, $\lambda$ is free at both the link between output $b$ of $1\times n$ WSS $ap'$ and input $p'$ of $\mathcal{Q}_{ab}(r)$, and that between input $a$ of $n\times1$ WSS $bq'$ and output $q'$ of $\mathcal{Q}_{ab}(r)$. Lemma \ref{lemma1} shows that $\mathcal{Q}_{ab}(r)$ can set up a sub-connection $R\left(p',q',\lambda \right)$ from its input $p'$ to its output $q'$, as long as $\lambda$ is idle at its input $p'$ and its output $q'$. Thus, $\hat{R}$ can be established, which proves this theorem.
\end{proof}
\subsection{Physical-layer Performance}
We use a commercial software OptiSystem to simulate the transmission performance of a $64\times64$ modular OXC $\hat{\mathcal{Q}}\left(8\times8,w\right)$, which contains 64 $1\times8$ WSSs, 64 $8\times8$ $\mathcal{Q}(8,w)$, and 64 $8\times1$ WSSs. As Fig. \ref{fig11} plots, a connection will pass through two $1\times8$ WSSs and two $8\times1$ WSSs. Since there is no $1\times8$ WSS module in OptiSystem, we construct the $1\times8$ WSS using 1 $1\times8$ DeMux, 8 $1\times8$ OSs, and 8 $8\times1$ Muxs, according to the structure of PLC-based WSSs in \cite{14}.
\begin{figure}
\centering
\subfigure[Simulation setup]{
 \label{fig12a}
 \includegraphics[scale=0.71]{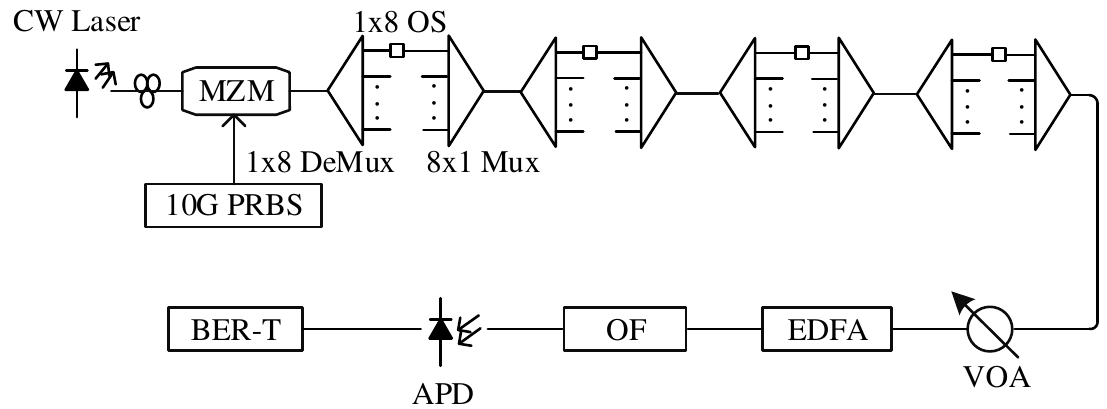}}
\subfigure[Simulation results]{
 \label{fig12b}
 \includegraphics[scale=0.75]{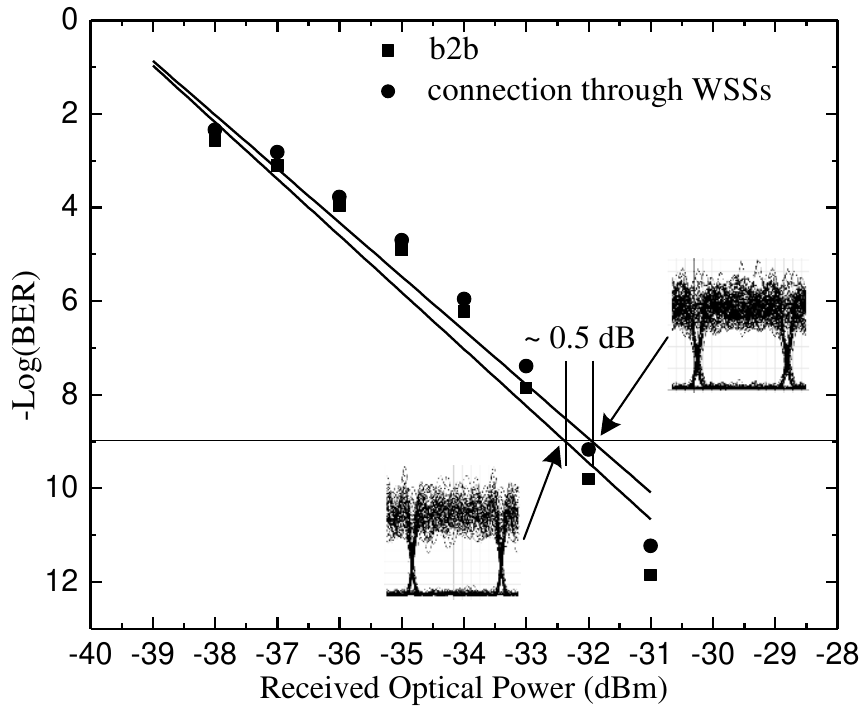}}
\caption{Transmission performance of an $8\times8$ modular OXC.}
\label{fig12}
\end{figure}

Fig. \ref{fig12a} shows the simulation setup. We generate a 10-Gb/s optical signal at wavelength 1552.53 nm using a Mach-Zehnder Modulator (MZM), and inject it into a path containing 4 $1\times8$ DeMuxs, 4 $8\times1$ OSs, and 4 $8\times1$ Muxs. At the receiver side, the optical signal goes through a variable optical attenuator (VOA) for power adjustment, an erbium-doped fiber amplifier (EDFA) for power amplification, and an optical filter (OF) for spontaneous emission noise suppression. Via an avalanche photodiode (APD), the optical signal is converted to an electrical signal, which is finally fed to a bit-error rate tester (BERT) to measure bit-error-ratio (BER) performance.

In the simulation, we measure the BER of the connection under different received optical powers (ROPs) by tuning the VOA. We compare the BER performance of the connection with that of a back-to-back (BtB) signal that only passes through a short fiber link in Fig. \ref{fig12b}. Compared to the BtB signal, the connection only suffers a power penalty of $\sim0.5$ dB at the BER of $10^9$. Furthermore, as \cite{18} shows, the insertion loss of a $1\times k$ LCOS-based WSS is $\sim5$ dB, which is independent of the port count $k$. Thus, the insertion loss of $\hat{\mathcal{Q}}\left(n\times r,w\right)$ is $\sim20$ dB, and will not increase with the port count of the OXC. This indicates that the modular OXC is scalable in terms of the transmission performance and the insertion loss.

To be cost-effective, it is possible to replace each $1\times n$ WSS at the input with a $1\times n$ OC, which however may increase the insertion loss of the modular OXC. For example, if we replace each $1\times8$ WSS at the input of $\hat{\mathcal{Q}}\left(8\times8,w\right)$, the insertion loss will be increased by 4 dB.
\subsection{Cabling Complexity}
The modularization of OXCs not only scales down the size of employed optical components, but also cuts down the cabling complexity. As Fig. \ref{fig11} shows, the number of fibers between two adjacent stages is $2Nn$, which is $2/r$ of that of the classical OXC. For example, a $160\times160$ classical OXC contains $160^2=25600$ fibers. The number of fibers is reduced to 2560, if we use $1\times8$ WSSs and $20\times20$ OXCs to construct a $160\times160$ modular OXC $\hat{\mathcal{Q}}\left(8\times20,w\right)$. In particular, the number of cables is $2N^{1.5}$ when $r=n=\sqrt{N}$.

Also, all the fibers inside each $r\times r$ OXC can be capsulated in a standard module, and thus will not increase the cabling complexity of the entire network. An example of the integrated OXC module is the $5\times5$ WSS reported by \cite{19}. We expect that the size of OXC module will be large in the near future.
\section{Conclusion} \label{conclusion}
In this paper, we propose a method to construct a large-scale WSS-based OXC. Our analysis shows that the proposed OXC has the following advantages: (1) it is modular since it can be constructed from a set of small-size WSSs and small-size OXC modules; (2) it is scalable because its cabling complexity is smaller than that of the classical OXC; (3) its power penalty is small, and its insertion loss is acceptable and does not increase with the port count.
\ifCLASSOPTIONcaptionsoff
  \newpage
\fi
\bibliographystyle{IEEEtran}
\bibliography{IEEEabrv,Ref}







\end{document}